\newtheorem{remark}{Remark}
\newtheorem{proposition}{Proposition}
\newtheorem{definition}{Definition}
\newcommand{\A}{\bm A}
\newcommand{\w}{\bm w}
\newcommand{\Ss}{\bm S}
\newcommand{\G}{\bm G}
\newcommand{\HH}{\bm H}
\newcommand{\I}{\bm{I}}
\newcommand{\W}{\bm W}
\newcommand{\PP}{\bm P}
\newcommand{\B}{\bm B}
\newcommand{\x}{ \bm{x} }
\newcommand{\y}{ \bm{y} }
\newcommand{\cc}{\mathbf c}
\newcommand{\s}{\mathbf s}
\newcommand{\ttheta}{\mathbf \Theta}
\newcommand{\hh}{\bm h}
\newcommand{\g}{\bm g}
\begin{document}
\!\!\!\title{Beamforming Optimization for Wireless Network Aided by  Intelligent Reflecting   Surface with Discrete  Phase Shifts }
\author{\IEEEauthorblockN{Qingqing Wu,  \emph{Member, IEEE} and Rui Zhang, \emph{Fellow, IEEE}
\thanks{ The authors are with the Department of Electrical and Computer Engineering, National University of Singapore, email:\{elewuqq, elezhang\}@nus.edu.sg. Part of this work has been  presented in  \cite{wu2018IRS_discrete}.}}  }

\maketitle
\vspace{-6mm}
\begin{abstract}
Intelligent reflecting surface (IRS) is  a cost-effective solution  for achieving high spectrum and energy efficiency in future wireless networks   by leveraging massive low-cost passive elements that are able to reflect the signals with adjustable phase shifts. Prior  works on IRS mainly consider continuous phase shifts at reflecting elements, which are practically difficult to implement due to the hardware limitation.  In contrast, we study in this paper an IRS-aided wireless network, where an IRS with only a finite number of phase shifts at each element  is deployed to assist in  the communication from a multi-antenna access point (AP) to multiple single-antenna users. We aim to minimize the transmit power at the AP by jointly optimizing the continuous  transmit precoding at the AP and the discrete reflect phase shifts  at the IRS, subject to a given set of minimum signal-to-interference-plus-noise ratio (SINR) constraints at the user receivers. The considered problem is shown to be a mixed-integer non-linear program (MINLP) and thus is difficult to solve in general. To tackle this problem, we first study the single-user case with one user assisted by the IRS and propose both optimal and suboptimal algorithms for solving it. Besides,  we analytically show that as compared to the ideal  case with continuous phase shifts,  the IRS with discrete phase shifts achieves the same squared power gain in terms of  asymptotically large number of reflecting elements, while a constant proportional power  loss is incurred that depends only on the number of phase-shift levels. The proposed designs for the single-user case are also  extended to the general setup with multiple users among which some are aided by the IRS.  Simulation results verify our performance analysis  as well as the effectiveness of our proposed designs as compared to various  benchmark schemes.
\end{abstract}
\vspace{-6mm}
\begin{IEEEkeywords}
Intelligent reflecting surface,  joint active and passive beamforming design, discrete phase shifts optimization.
\end{IEEEkeywords}

\section{Introduction}
Although massive multiple-input multiple-output (MIMO) technology has significantly improved the spectrum efficiency of wireless communication systems, the required high complexity, high energy consumption, and high hardware cost are  still  the main hindrances to its implementation in practice  \cite{zhang2016fundamental,Hien2013,wu2016overview,foad16jstsp}.  Recently, intelligent reflecting surface (IRS) has been  proposed as a new and  cost-effective solution for achieving high  spectrum and energy efficiency for wireless communications  via only low-cost reflecting elements \cite{JR:wu2019IRSmaga,JR:wu2018IRS}. An IRS is generally composed of a large number of passive elements each being able to reflect the incident signal with an adjustable phase shift. By smartly tuning the phase shifts of all elements adaptively according  to the dynamic wireless channels,  the  signals reflected by an IRS can add constructively or destructively with those non-reflected by it at a nearby user receiver to boost the desired  signal power and/or suppress the co-channel interference, thus significantly  enhancing the communication performance without the need of deploying additional  active base stations (BSs) or relays. In addition, without employing any  transmit radio frequency (RF)  chains, IRSs usually have much smaller signal coverage  than active BSs/relays, which makes it easier to practically deploy them without interfering each other. Moreover,   from the implementation  perspective, IRSs possess appealing  features such as low profile and lightweight, thus can be easily mounted on walls or ceilings of buildings, while integrating them into the existing   cellular and WiFi systems does not require any change in the  hardware at the BSs/access points (APs) as well as  user terminals. As compared to existing wireless technologies based on active elements/RF chains such as MIMO relay and massive MIMO, it has been shown in \cite{JR:wu2019IRSmaga,JR:wu2018IRS} that IRS with only passive reflecting elements can potentially yield superior performance scaling with the increasing number of elements, but at substantially reduced hardware and energy costs. It is worth noting that there have been other terminologies similar to IRS proposed in the literature, such as intelligent wall \cite{subrt2012intelligent}, passive intelligent mirror \cite{huangachievable,huang2018largeRIS},  smart reflect-array \cite{tan2018enabling}, and reconfigurable metasurface \cite{di2019smart}, among others.

\begin{figure}[!t]
\centering
~~~~~~~~~~~~~\includegraphics[width=0.95\textwidth]{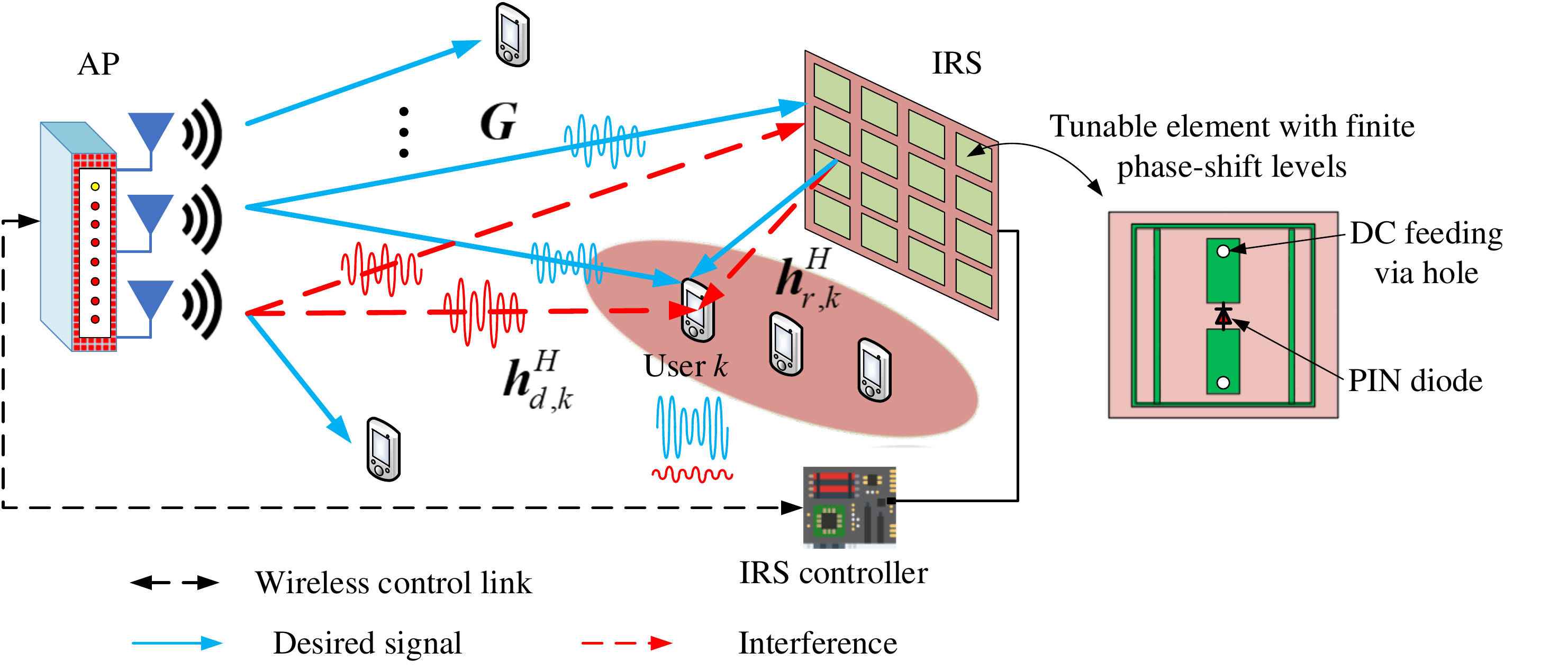}
\caption{An IRS-aided multiuser wireless communication system. } \label{system:model}
\end{figure}

IRS-aided wireless communications have drawn significant research  attention recently (see, e.g.  \cite{JR:wu2019IRSmaga,wu2018IRS,JR:wu2018IRS,jiang2019over,yan2019passive,yu2019miso,cui2019secure,yang2019irs}).
Specifically, for the IRS-aided wireless system with a single user,  it was shown in  \cite{wu2018IRS,JR:wu2018IRS} that the IRS is capable of creating a ``signal hot spot'' in its vicinity via joint active beamforming at the BS/AP and passive beamforming at the IRS. In particular, an asymptotic receive signal power or signal-to-noise ratio (SNR) gain in the order of  $\mathcal{O}(N^2)$, with $N$ denoting the number of reflecting elements at the IRS, was shown as  $N\rightarrow \infty$. Such a squared  power/SNR gain is larger than that of massive MIMO, i.e., $\mathcal{O}(N)$  \cite{Hien2013}, due to the fact that the IRS combines the functionalities of both receive and transmit arrays for energy harvesting and reflect beamforming, respectively,   thus doubling the gain as compared to massive MIMO with a constant total transmit power regardless of $N$. In addition, with conventional MIMO relays (even assuming their full-duplex operation with perfect self-interference cancellation), the  SNR at the user receiver increases with the number of active antennas, $N$, only with $\mathcal{O}(N)$ due to the noise effect at the relay  \cite{JR:wu2018IRS}, which is also lower than $\mathcal{O}(N^2)$ of the IRS thanks to its full-duplex and noise-free signal reflection.  Furthermore, for a general multiuser system aided by IRS as shown in Fig. 1,  it was shown in \cite{JR:wu2018IRS} that besides enhancing the desired signal power/SNR  at the user receiver,  a virtually ``interference-free'' zone can also be established in the proximity of the IRS, by exploiting  its spatial interference nulling/cancellation  capability.
 In particular,  a user near the IRS is  able to tolerate more  interference from the AP as compared to others outside the coverage region of  the IRS.
This thus provides more flexibility  for designing the transmit precoding at the AP for such ``outside'' users  and as a result improves the signal-to-interference-plus-noise ratio (SINR)  performance of all users in the system.

The hardware implementation of IRS is based on the concept of   ``metasurface'', which is made of  two-dimensional (2D) metamaterial with high controllability \cite{JR:wu2019IRSmaga,cui2014coding,kaina2014shaping,zhang2018space,di2019smart}.  By properly designing  each element of the metasurface, including geometry shape (e.g., square or split-ring), size/dimension,  orientation, etc.,  different phase shifts of the reflected signal can be resulted.    
However,  for wireless communication applications, it is more desirable to adjust the phase shifts by the IRS in real time to  cater for time-varying wireless channels arising from the user mobility. This can be realized  by leveraging electronic devices such as positive-intrinsic-negative (PIN) diodes,  micro-electromechanical system (MEMS) based switches, or field-effect transistors (FETs) \cite{nayeri2018reflectarray}.  In Fig. \ref{system:model}, we show one example of a tunable element's structure, in which a PIN  diode is embedded  in the  center to achieve the binary phase shifting. Specifically, the PIN diode can be switched between  ``On'' and ``Off'' states by controlling its biasing voltage via a direct-current (DC) feeding line, thereby generating  a phase-shift difference of $\pi$ in rad \cite{JR:wu2019IRSmaga,cui2014coding}. 
In practice, different phase shifts at IRS's elements can be realized independently via setting the corresponding  biasing voltages by using a smart IRS controller. Therefore, the main power consumption of IRS is due to the feeding circuit of the diode used to tune the elements (on the order of microwatts \cite{kaina2014shaping}),  which is significantly lower than that by transmit RF chains of conventional active arrays  \cite{Hien2013,cui2004energy}.

The existing works \cite{wu2018IRS,JR:wu2018IRS,jiang2019over,cui2019secure,yan2019passive,yu2019miso,huangachievable,huang2018largeRIS,yang2019irs} on IRS-aided wireless communications  are mainly based on the assumption of  continuous phase shifts at its reflecting elements. However,  in practice, this is difficult to realize since manufacturing each reflecting element with more levels of phase shifts incurs a higher cost, which may not be scalable as the number of elements for IRS is usually very  large. For example, to enable $16$ levels of phase shifts, $\log_216=4$ PIN diodes shown in Fig. 1  need to be integrated to each element. This not only makes the element design more challenging due to its limited size, but also requires extra controlling pins at the IRS controller to control more  PIN diodes. Although a single varactor diode can be used to achieve more than two phase shifts, it requires a wide range of biasing voltages.  As such,  for practical IRSs with  a large number of elements, it is more cost-effective to implement only discrete phase shifts with  a small number of control bits for each element, e.g., $1$-bit for  two-level ($0$ or $\pi$) phase shifts \cite{cui2014coding,wu2018IRS_discrete,tan2018enabling}.
Note that such limited discrete phase shifts inevitably cause misalignment of IRS-reflected and non-IRS-reflected  signals at designated  receivers and thus result in certain performance degradation, which needs to be investigated for practically  deploying  low-resolution IRSs in future wireless systems. Besides, it remains unknown whether the ``squared power/SNR  gain'' revealed in \cite{JR:wu2018IRS} still holds for the case of IRS with discrete phase shifts, as well as how the resultant discrete-phase constraints impact the IRS's passive beamforming design jointly with the AP's active transmit precoding.

Motivated by the above, we study  in this paper an IRS-aided multiuser wireless communication system shown in Fig. \ref{system:model}, where  a multi-antenna AP serves multiple single-antenna users with the help of an IRS. In contrast to the continuous phase shifts assumed in \cite{wu2018IRS,JR:wu2018IRS,jiang2019over,cui2019secure,yan2019passive,yu2019miso,huangachievable,huang2018largeRIS,yang2019irs}, we consider the practical case where each element of the IRS has only a finite number of discrete  phase shifts. We aim to minimize the transmit power required at the AP via jointly optimizing the  active transmit precoder at the AP and passive reflect discrete phase shifts  at the IRS, subject to a given set of SINR constraints at the user receivers. However,  the transmit precoder and  discrete phase shifts are intricately  coupled in the SINR constraints,  rendering the formulated optimization problem  a mixed-integer non-linear program (MINLP) that  is NP-hard and thus difficult to solve in general.

To tackle this new problem, we first consider the single user setup where there is only  one active user served by a nearby IRS. By exploiting the structure of the simplified  problem, we show that it can be transformed into an integer linear program (ILP), for which the globally optimal solution can be obtained by applying  the branch-and-bound method. To reduce the computational complexity for the optimal solution, we further propose a low-complexity successive refinement algorithm  where the optimal discrete phase shifts of different elements at the IRS are determined one by one in an iterative manner with those of the others being fixed. This algorithm is shown to achieve close-to-optimal performance.  Moreover,  we analytically show that when the number of reflecting elements at the IRS, $N$, increases, the power loss due to discrete phase shifts as compared to the ideal case with  continuous phase shifts approaches  a constant in dB   that depends only on the number of phase-shift levels at each element, but regardless of $N$ as $N\rightarrow \infty$.
As a result, the asymptotic squared power gain of $\mathcal{O}(N^2)$ by  the IRS shown in \cite{wu2018IRS} with continuous phase shifts still holds with discrete phase shifts. Next, we extend the successive refinement algorithm to the general case with multiple users at arbitrary locations in the network,   by considering  the suboptimal  zero-forcing (ZF) based linear precoding at the AP for low-complexity implementation.   Numerical  results are shown to validate our theoretical analysis  and demonstrate the effectiveness of using IRS with practical discrete phase shifts to improve the performance of wireless networks as compared to the case without using IRS.  Furthermore, it is shown that the proposed algorithms for joint AP precoding and IRS discrete phase shifts optimization outperform both the quantization-based and codebook-based schemes with IRS.  Finally, we show that employing practical IRS with even finite-level low-cost phase shifters in wireless networks is able to achieve the same multiuser SINR performance as compared to the conventional large (massive)  MIMO system without using the IRS, but instead using  more active antennas at the AP, thus significantly reducing the system energy consumption as well as hardware cost.

The rest of this paper is organized as follows. Section II introduces the system model and the problem formulation for the IRS-aided  wireless system with discrete phase shifts.
In Sections III and IV, we propose both optimal and suboptimal algorithms to solve the optimization problems in single-user and multiuser cases, respectively.
Section V presents numerical results to evaluate  the performance of the proposed designs. Finally, this  paper  is concluded in Section VI.

\emph{Notations:} Scalars are denoted by italic letters, vectors and matrices are denoted by bold-face lower-case and upper-case letters, respectively. $\mathbb{C}^{x\times y}$ denotes the space of $x\times y$ complex-valued matrices. For a complex-valued vector $\bm{x}$, $\|\bm{x}\|$ denotes its Euclidean norm, $\arg(\bm{x})$ denotes a vector with each entry being the phase of the corresponding entry in $\bm{x}$, and $\text{diag}(\bm{x})$ denotes a diagonal matrix with each diagonal entry being the  corresponding entry in $\bm{x}$.
The distribution of a circularly symmetric complex Gaussian (CSCG) random vector with mean vector  $\bm{x}$ and covariance matrix ${\bm \Sigma}$ is denoted by  $\mathcal{CN}(\bm{x},{\bm \Sigma})$; and $\sim$ stands for ``distributed as''. For a square matrix $\Ss$, ${\rm{tr}}(\Ss)$ and $\Ss^{-1}$ denote its trace and inverse, respectively, while $\Ss\succeq \bm{0}$ means that $\Ss$ is positive semi-definite.  For any general matrix $\A$, $\A^H$,  ${\rm{rank}}(\A)$, and $\A(i,j)$ denote its conjugate transpose, rank, and $(i,j)$th entry, respectively. $\I_M$  denotes an identity matrix  with size $M\times M$. $\mathbb{E}(\cdot)$ denotes the statistical expectation. $ \mathrm{Re}\{\cdot\}$ denotes the real part of a complex number. For a set $\mathcal{K}$, $|\mathcal{K}|$ denotes its cardinality.

\section{System Model and Problem Formulation}
\subsection{System Model}

As shown in Fig. \ref{system:model}, we consider a multiuser multiple-input single-output (MISO) wireless system where an IRS composed of $N$ reflecting elements is deployed to assist in the downlink communication from an AP with $M$ antennas to $K$ single-antenna users. The sets of reflecting elements and users  are denoted by  $\mathcal{N}$ and $\mathcal{K}$, respectively, where $|\mathcal{N}| = N$ and $|\mathcal{K}| = K$.  While this paper focuses on the downlink communication, the results are extendable to the uplink communication as well by exploiting the uplink-downlink  channel reciprocity.  In practice, each IRS is usually attached  with a smart controller that  communicates with the AP via a separate wireless link for coordinating transmission  and exchanging information on e.g.  channel knowledge,  and controls  the phase shifts of all reflecting elements in real time  \cite{JR:wu2019IRSmaga}.
Due to the substantial path loss, we consider  only the signal reflection by the IRS for the first time and ignore the signals that are reflected by it   two or more times.  To characterize the optimal  performance of the IRS-aided wireless system with discrete phase shifts, it is assumed that  the channel state information (CSI) of all channels involved  is perfectly known  at the AP in each channel coherence time, based on the various channel acquisition methods discussed in \cite{JR:wu2019IRSmaga}.

According to \cite{JR:wu2019IRSmaga} and under the assumption of an ideal signal refection model by ignoring the hardware imperfections such as non-linearity and noise,   the  reflected signal by the $n$th element of the IRS, denoted by $\hat{y}_n$, can be expressed as  the multiplication of the corresponding incident  signal, denoted by $\hat{x}_n$,  and a complex reflection coefficient\footnote{For convenience, we use the equivalent baseband signal model to represent the actual signal reflection at IRS  in the RF band.}, i.e.,
\begin{align}
\hat{y}_n = \beta_n e^{j\theta_n}\hat{x}_n,  n \in \mathcal{N},
\end{align}
where  $\beta_n \in [0, 1]$ and $\theta_n\in [0, 2\pi)$ are the reflection amplitude and phase shift of element $n$, respectively.
As such, the IRS with $N$ elements performs a linear mapping from the incident signal vector to a reflected signal vector based on an equivalent  $N\times N$  diagonal phase-shift  matrix $\ttheta$, i.e.,
${\bm{\hat y}} = \ttheta \bm{\hat x}$,  where $\ttheta = \text{diag} (\beta_1 e^{j\theta_1}, \cdots, \beta_N e^{j\theta_N})$,  $\bm{\hat x} = [{\hat x}_1, \cdots, {\hat x}_N]^T$, and $\bm{\hat y} = [{\hat y}_1, \cdots, {\hat y}_N]^T$.  Theoretically, the reflection amplitude of each element  can be adjusted  for different purposes such as channel estimation, energy harvesting, and performance optimization \cite{JR:wu2019IRSmaga}. However, in practice,  it is costly  to implement independent control of the reflection amplitude and phase shift simultaneously; thus, each element is usually  designed to maximize the  signal reflection for simplicity  \cite{JR:wu2019IRSmaga,JR:wu2018IRS,nayeri2018reflectarray,cui2014coding,kaina2014shaping}. As such, we assume $\beta_n=1$, $\forall n\in \mathcal{N}$, in the sequel of this paper.  For ease of practical implementation, we also consider that the phase shift at each element of the IRS can take  only a finite number of discrete values.  Let $b$ denote the number of bits used to indicate the number of phase shift levels $L$ where  $L=2^b$.
For simplicity, we assume that such discrete phase-shift values are obtained by  uniformly quantizing the interval  $ [0, 2\pi)$. Thus, the set of discrete phase-shift values at each element is given by
\begin{align}
\mathcal{F}= \{0,\Delta\theta, \cdots, ( {L}-1)\Delta\theta \},
\end{align}
where $\Delta\theta= 2\pi/L$.

Denote by $\bm{h}^H_{d,k}\in \mathbb{C}^{1\times M}$, $\bm{h}^H_{r,k}\in \mathbb{C}^{1\times N}$, and $\bm{G}\in \mathbb{C}^{N\times M}$ the baseband equivalent channels from the AP to user $k$, the  IRS to user $k$, and the AP to IRS, respectively.  At the AP,  we consider the conventional continuous linear precoding  with  $\bm{w}_k\in \mathbb{C}^{M\times 1}$  denoting the  transmit  precoding vector for user $k$. The complex baseband transmitted signal at the AP can be then expressed as $\bm{x}= \sum_{k=1}^K\bm{w}_ks_k$ where  $s_k$'s denote the information-bearing symbols of users which are modelled as independent and identically distributed (i.i.d.) random variables with zero mean and unit variance. Accordingly, the total transmit power consumed at the AP is given by
\begin{align}
P =\sum_{k=1}^{K}\|\bm{w}_k\|^2.
\end{align}
For user $k$, the signal directly coming  from the AP and that reflected by the IRS are combined at the receiver and thus the received signal can be expressed as
\begin{align}
y_k= ( \bm{h}^H_{r,k}\ttheta \bm{G} +  \bm{h}^H_{d,k}) \sum_{j=1}^K\bm{w}_js_j + z_k,  k \in \mathcal{K},
\end{align}
where $z_k$ denotes i.i.d. additive white Gaussian noise (AWGN) at  user $k$'s receiver  with zero mean and variance $\sigma_k^2$. The SINR of user $k$ is thus given by
\begin{align}\label{eq:SINR}
\text{SINR}_k = \frac{|( \bm{h}^H_{r,k}\ttheta \bm{G}+\bm{h}^H_{d,k})\bm{w}_k |^2}{\sum_{j\neq k}^{K}|( \bm{h}^H_{r,k}\ttheta \bm{G}+\bm{h}^H_{d,k})\bm{w}_j |^2 +  \sigma^2_k},  k \in \mathcal{K}.
\end{align}

{\begin{remark}
\emph{Note that \eqref{eq:SINR} provides a general expression of the SINR for a user at arbitrary location in the IRS-aided single-cell system considered in this paper. While in practice, if user $m$, $m\in \mathcal{K}$, is sufficiently  far from the passive IRS, the reflection of the IRS can be ignored and  its  SINR is approximated by $\text{SINR}_m \thickapprox \frac{|\bm{h}^H_{d,m}\bm{w}_m |^2}{\sum_{j\neq m}^{K}|\bm{h}^H_{d,m}\bm{w}_j |^2 +  \sigma^2_m}$,  which corresponds to the traditional  case where   user $m$ is  served by the AP via transmit precoding only without the IRS. Although the phase-shift matrix $\ttheta$ does not directly affect  the SINR of user $m$ in this case, it can have an indirect effect on it via balancing  the SINRs of those  users nearby the IRS as shown in \eqref{eq:SINR} and thereby adjusting their AP transmit precoding vectors, which then contribute to the interference at user $m$'s receiver.}
\end{remark} }

\subsection{Problem Formulation}
Denote by $\gamma_k>0$ the minimum SINR requirement of user $k,  k\in \mathcal{K}$.  Let $\bm{\theta}= [\theta_1, \cdots, \theta_N]$ and $\W = [{\bm w}_1, \cdots,{\bm w}_K]\in \mathbb{C}^{M\times K}$.   In this paper, we aim to minimize  the total transmit power at the AP by jointly optimizing the  transmit precoders $\W$ at the AP and phase shifts $\bm{\theta}$ at the IRS, subject to the user SINR constraints as well as the IRS discrete phase-shift constraints.   
The corresponding optimization problem is formulated as
\begin{align}
\text{(P1)}: ~~\min_{\W, \bm{\theta}} ~~~&\sum_{k=1}^{K}\|\bm{w}_k\|^2 \\
\mathrm{s.t.}~~~~&\frac{|( \bm{h}^H_{r,k}\ttheta \bm{G}+\bm{h}^H_{d,k})\bm{w}_k |^2}{\sum_{j\neq k}^{K}|( \bm{h}^H_{r,k}\ttheta \bm{G}+\bm{h}^H_{d,k})\bm{w}_j |^2 +  \sigma^2_k}\geq \gamma_k, \forall k \in \mathcal{K}, \label{SINR:constraints}\\
&\theta_n \in \mathcal{F} =\{0,\Delta\theta, \cdots, (L-1)\Delta\theta \}, \forall n \in \mathcal{N}. \label{phase:constraints}
\end{align}
Note that the constraints in \eqref{SINR:constraints} are non-convex due to the coupling of $\W$ and $\bm{\theta}$ in users' SINR expressions. In addition,  the constraints in \eqref{phase:constraints} restrict $\theta_n$'s  to be discrete values. As a result,  problem (P1) is an MINLP which is generally NP-hard,  and there is no standard method for obtaining its globally optimal solution efficiently \cite{so2007approximating,burer2012non}.  One commonly used approach is to first solve problem (P1) with all discrete optimization variables $\theta_n$'s relaxed to their continuous counterparts and then directly quantize each of the obtained continuous phase shifts to its nearest discrete value in $\mathcal{F}$ \cite{wu2018IRS_discrete}. However, even after such relaxation, (P1) is still a non-convex optimization problem  \cite{JR:wu2018IRS}.
Furthermore, such a direct  quantization method  may be ineffective for the practical IRS with low-resolution phase shifters (e.g., $b=1$), especially in the multiuser case with severe co-channel interference (as will be shown later in Section V). Nevertheless, this suboptimal quantization approach will be used in Section III-C to characterize the performance of the IRS with discrete phase shifts  in the regime of asymptotically large $N$ as compared to the ideal case with continuous phase shifts.

\section{Single-User System}\label{sec:III}
First, we consider the single-user setup, i.e., $K=1$, where there is only one user in the considered time-frequency dimension. This corresponds to the practical scenario when orthogonal multiple access (such as time division multiple access) is employed to separate the communications for different users.  Due to the absence of multiuser interference, (P1) is simplified to  (by dropping the user index)
\begin{align}
\min_{\bm{w}, \bm{\theta}} ~~~& \|\bm{w}\|^2  \label{eq:obj}\\
\mathrm{s.t.}~~~~&| (\bm{h}^H_r\ttheta \bm{G}+\bm{h}^H_d )\bm{w}|^2   \geq \gamma \sigma^2,   \label{SINR:constraints:SU} \\
&\theta_n \in \mathcal{F}, \forall n \in \mathcal{N}. \label{phase:constraints:SU0}
\end{align}
For any given phase shifts $\bm{\theta}$, it is known that the maximum-ratio transmission (MRT) is the optimal transmit precoder to problem \eqref{eq:obj} \cite{tse2005fundamentals}, i.e.,
$\bm w^* = \sqrt{P} \frac{(\bm{h}^H_r\ttheta \bm{G}+\bm{h}^H_d  )^H}{\|\bm{h}^H_r\ttheta \bm{G} +\bm{h}^H_d \|}$. By substituting $\bm w^*$ into problem \eqref{eq:obj},
we obtain the optimal transmit power as $P^* =  \frac{\gamma\sigma^2}{\|\bm{h}^H_r\ttheta \bm{G}+ \bm{h}^H_d\|^2}$. As such, minimizing the AP transmit power is equivalent to maximizing the channel power gain of the combined user channel, i.e.,
\begin{align}\label{secIII:p3}
\max_{\bm{\theta}} ~~~&\|\bm{h}^H_r\ttheta \bm{G}+ \bm{h}^H_d\|^2\\
\mathrm{s.t.}~~~~&\theta_n \in \mathcal{F},\forall n \in \mathcal{N}. \label{SecIII:phaseconstraint}
\end{align}
 By applying the change of variables $\bm{h}^H_r\ttheta \bm{G} =\bm{v}^H\bm{\Phi} $ where  $\bm{v} = [e^{j\theta_1}, \cdots, e^{j\theta_N}]^H$ and $\bm{\Phi}=\text{diag}(\bm{h}^H_r)\bm{G} \in \mathbb{C}^{N \times M}$, we have $\|\bm{h}^H_r\ttheta \bm{G} + \bm{h}^H_d\|^2 =\|\bm{v}^H\bm{\Phi}+ \bm{h}^H_d\|^2 $.  Let $\A=\bm{\Phi}\bm{\Phi}^H$ and $\bm{\hat h}_d= \bm{\Phi}\bm{h}_d$. 
   Problem \eqref{secIII:p3} is thus equivalent to
\begin{align}\label{secIII:p4}
\text{(P2)}: ~~\max_{\bm{\theta}} ~~~&\bm{v}^H\A\bm{v} +  2\mathrm{Re}\{ \bm{v}^H \bm{\hat h}_d \}  + \|\bm{h}^H_d\|^2\\
\mathrm{s.t.}~~~~&\theta_n \in \mathcal{F}, \forall n \in \mathcal{N}. \label{SecIII:phaseconstraint}
\end{align}
Although (P2) is still non-convex, we  obtain its optimal and high-quality suboptimal solutions by exploiting its special   structure.

 \subsection{Optimal Solution}
 We first show that problem (P2) can be reformulated as an ILP for which the globally optimal solution can be obtained by applying the  branch-and-bound method. Specifically, we exploit the special ordered set of type 1 (SOS1) \cite{burer2012non} and transform  (P2) into a linear program  with only binary optimization variables. The formal definition of SOS1 is given as follows.
 \begin{definition}
 \emph{A special ordered set of type 1 (SOS1) is a set of vectors with length $N$, each of which  has only one entry  being 1 with the others being  0 \cite{burer2012non}, i.e.,
\begin{align}
 \sum_{i=1}^{N}\x(i) =1,    \x(i) \in\{0, 1\}.
 \end{align}}
 \end{definition}The SOS1 vector is useful in expressing an optimization variable that belongs to a set with discrete variables.
By applying some algebra operations using the fact that $\A$ is positive semi-definite,  the objective function of (P2)  can be expanded as (by ignoring the constant terms)
\begin{align}
\mathcal{G}(\bm{\theta}, \phi_{i,n}) \triangleq &\sum_{i=1}^{N-1}\sum_{n=i+1}^{N}2|\A(i,n)|\left( \cos(\arg(\A(i,n)))\cos(\phi_{i,n} ) - \sin(\arg(\A(i,n)))\sin(\phi_{i,n} ) \right) \nonumber \\
+2\sum_{n=1}^{N}&|\bm{\hat h}_d(n)|( \cos(\arg(\bm{\hat h}_d(n)) )\cos(\theta_n) - \sin( \arg(\bm{\hat h}_d(n))  )\sin(\theta_n)  )+ \sum_{n=1}^{N}\A(n,n),
\end{align}
where $\phi_{i,n} = \theta_i-\theta_n, i, n \in \mathcal{N}$.  As such, problem (P2) is equivalent to
 \begin{align}\label{prob:MILP}
 \max_{\bm{\theta}, \{\phi_{i,n}\}} ~~~&\mathcal{G}(\bm{\theta}, \phi_{i,n})  \\
\mathrm{s.t.}~~~~&\theta_n \in \mathcal{F}, \forall n \in \mathcal{N}, \label{phase:constraints:SU} \\
~~~~& \phi_{i,n} = \theta_i-\theta_n, \forall i, n \in \mathcal{N}. \label{P4:C9}
 \end{align}
 Let ${\mathbf a}=[0,\Delta\theta, \cdots, ( L-1)\Delta\theta]^T$,   $\cc =[1, \cos(\Delta\theta), \cdots, \cos( ( L-1)\Delta\theta)]^T$, and $\s =[1, \sin(\Delta\theta),\\ \cdots, \sin( ( L-1)\Delta\theta)]^T$.
Note that $\cos(\phi_{i,n})$, $\sin(\phi_{i,n})$, $\cos(\theta_n)$, and $\sin(\theta_n)$ in $\mathcal{G}(\bm{\theta}, \phi_{i,n})$ are all non-linear functions with respect to the associated optimization variables.  To overcome this issue, we  introduce an SOS1 binary vector  $\x_n$ for element $n$ of the IRS. Accordingly,  $\theta_n$, $\cos(\theta_n)$, and $\sin(\theta_n)$ can be expressed in the linear form of  $\x_n$, i.e.,
\begin{align}\label{cos:phase}
\theta_n={\mathbf a}^T\x_n,  ~~~\cos(\theta_n)= \cc^T\x_n, ~~~ \sin(\theta_n)= \s^T\x_n.
\end{align}
 Similarly, for the phase-shift difference of two elements $i$ and $n$, i.e., $\phi_{i,n}$, we introduce an SOS1 binary vector  $\y_{i,n}$.  Since the value of $\theta_n$ is chosen from $ \mathcal{ F}$, all the possible values of $\phi_{i,n}= \theta_i-\theta_n\in (-2\pi, 2\pi)$ belongs to the set $\mathcal{\hat F}= \{-(L-1)\Delta\theta ,\cdots, -\Delta\theta, 0,\Delta\theta, \cdots, ( L-1)\Delta\theta \}$.  To overcome the phase ambiguity of $\phi_{i,n}$ with respect to $2\pi$, we further introduce a binary variable  $\varepsilon_{i,n}$ which takes the value of 0 when $ \theta_i-\theta_n$ belongs to $[0, 2\pi)$ and 1 otherwise. As such,   all the possible values of  $\phi_{i,n}$ are restricted to the set $\mathcal{ F}$ and accordingly we  have
\begin{align}\label{cos:phase:diff}
\phi_{i,n}= {\mathbf a}^T\y_{i,n}-  2\pi\varepsilon_{i,n},  ~~~\cos(\phi_{i,n})= \cc^T\y_{i,n}, ~~~ \sin(\phi_{i,n})= \s^T\y_{i,n}.
\end{align}
Substituting \eqref{cos:phase} and  \eqref{cos:phase:diff} into problem \eqref{prob:MILP}, we obtain the following optimization problem
 \begin{align}\label{prob:MILP2}
 \max_{\{\x_n\},\{\y_{i,n}\}, \{\varepsilon_{i,n}\} } ~~~&\sum_{i=1}^{N-1}\sum_{n=i+1}^{N}2|\A(i,n)|\left( \cos(\arg(\A(i,n))) \cc^T - \sin(\arg(\A(i,n))) \s^T \right)\y_{i,n} \nonumber \\
&+2\sum_{n=1}^{N}|\bm{\hat h}_d(n)|( ( \cos(\arg(\bm{\hat h}_d(n)) ) \cc^T - \sin( \arg(\bm{\hat h}_d(n))  )\s^T)\x_n  \\
\mathrm{s.t.}~~~~& {\mathbf a}^T(\x_i -\x_n) +2\pi \varepsilon_{i,n} ={\mathbf a}^T\y_{i,n}, \forall i, n \in \mathcal{N},  \label{P4:C9} \\
~~~~& \x_n \in {SOS1},    \y_{i,n} \in {SOS1}, \varepsilon_{i,n}\in \{0, 1\}, \forall i, n  \in \mathcal{N}.
\end{align}
It is not difficult to show  that problem \eqref{prob:MILP2} is an ILP and thus can be optimally solved  by applying the branch-and-bound method \cite{burer2012non}.
\subsection{Suboptimal Solution}\label{suboptimal:solution:SU}
Although the optimal solution to (P2)  can be obtained as in the previous subsection,  the worst-case complexity  is still exponential over $N$  due to its fundamental  NP-hardness. To reduce the computational complexity,  we propose in this subsection an efficient successive refinement algorithm to solve (P2) sub-optimally, which will also be extended to the general  multiuser case in Section \ref{multiuser:sec}.  Specifically, we alternately optimize each of the $N$ phase shifts in an iterative manner by fixing the other $N-1$ phase shifts, until  the convergence is achieved.

The key to solving (P2) by applying the successive refinement algorithm  lies in the observation that for a given $n\in \mathcal{N}$,  by fixing $\theta_\ell$'s, $\forall \ell\neq n, \ell\in \mathcal{N}$, the objective function of (P2) is  linear with respect to $e^{j\theta_n}$, which can be written as
\begin{align}\label{obj:phase}
&2\mathrm{Re}\left\{e^{j\theta_n}\zeta_n\right\} + \sum_{\ell\neq n}^{N}\sum_{i\neq n}^{N}\A(\ell,i)e^{j(\theta_\ell-\theta_i)} +C,
\end{align}
where $\zeta_n$ and $C$ are constants  given by
\begin{align}
\zeta_n &= \sum_{\ell\neq n}^{N}\A(n,\ell)e^{-j\theta_\ell} +  \bm{\hat h}_{d}(n) \triangleq  |\zeta_n |e^{-j\varphi_n},  \label{obj:phase2} \\
C  &=  \A(n,n)+ 2\mathrm{Re} \Big\{\sum_{\ell\neq n}^{N}e^{j\theta_\ell}\bm{\hat h}_{d}(\ell)\Big\}   + \|\bm{\hat h}_d\|^2.
\end{align}
 Based on \eqref{obj:phase} and \eqref{obj:phase2}, it is not difficult to verify that the optimal $n$th phase shift is given by
\begin{align}\label{optimal:phase:SU}
\theta^*_n = \arg \min_{\theta \in \mathcal{F}  } | \theta -\varphi_n|.
\end{align}
By successively  setting the phase shifts of all elements based on \eqref{optimal:phase:SU} in the order from $n=1$ to $n=N$ and then repeatedly, the objective value of (P2) will be non-decreasing over the iterations. On the other hand,  the optimal objective value of  (P2) is upper-bounded by a finite value, i.e.,
\begin{align}
\bm{v}^H\A\bm{v} + 2\mathrm{Re}\{\bm{v}^H \bm{\hat h}_d\}  + \|\bm{h}^H_d\|^2\overset{(a)}{\leq} N\lambda_{\max}(\A) + 2\sum_{n=1}^{N}|\bm{\hat h}_d^H(n)| + \|\bm{h}^H_d\|^2,
\end{align}
where $\lambda_{\max}(\A)$ is the maximum eigenvalue of $\A$ and the inequality $(a)$ holds due to $\bm{v}^H\A\bm{v}\leq N\lambda_{\max}(\A)$ and $\mathrm{Re}\{\bm{v}^H \bm{\hat h}_d\}\leq \sum_{n=1}^{N}|\bm{\hat h}_d^H(n)|$.
Therefore,  the proposed algorithm is guaranteed to converge to a locally optimal solution. With the converged discrete phase shifts, the minimum transmit power $P^*$ can be obtained accordingly.

\subsection{Performance Analysis for IRS with Asymptotically Large $N$}
Next, we characterize the scaling law of the average received power at the user with respect to the number of reflecting elements, $N$, as $N\rightarrow \infty$ in an IRS-aided system with discrete phase shifts. For simplicity, we assume $M=1$ with $\G\equiv \bm{g}$ to obtain essential insight. Besides, the signal received at the user from the AP-user  link can be practically ignored for asymptotically large $N$ since in this case, the reflected signal power dominates in the total received power. Thus, the user's average received  power with $b$-bit phase shifters is approximately given by $P_r(b)\triangleq  \mathbb{E}(|h^H|^2) = \mathbb{E}(|\bm{h}^H_r\ttheta \bm{g}|^2)$ where $\bm{\theta}$ is assumed to be obtained  by quantizing each of the optimal continuous phase shifts obtained in \cite{wu2018IRS} to its nearest discrete value in $\mathcal{F}$.
\begin{proposition}\label{scaling:law}
\emph{\!\! Assume $\bm{h}^H_{r} \!\sim \!\mathcal{CN}(\bm{0},\!\varrho^2_h{\bm I})$ and \! $\bm{g} \!\sim\! \mathcal{CN}(\bm{0},\!\varrho^2_g{\bm I})$.   As $N\rightarrow \infty$,  we have}
\begin{align}\label{ratio}
\eta (b) \triangleq  \frac{P_r(b)}{P_r(\infty)} = \Big(\frac{2^b}{\pi}\sin\left(\frac{\pi}{2^b}\right)\Big)^2.
\end{align}
\end{proposition}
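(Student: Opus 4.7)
The plan is to track exactly how the quantization error alters the coherent combining achieved with continuous phase shifts. With $M=1$, the single-user received signal factor is $|\bm{h}_r^H \ttheta \bm{g}|$, and the optimal continuous phase shift in \cite{wu2018IRS} is $\theta_n^\star = \arg(h_{r,n}) - \arg(g_n)$, which makes every term in $\bm{h}_r^H \ttheta \bm{g}$ collinear. After quantizing to $\mathcal{F}$, write $\hat\theta_n = \theta_n^\star + \phi_n$, where the quantization error $\phi_n$ is i.i.d.\ uniform on $[-\pi/2^b,\,\pi/2^b)$. This uniformity is the first thing I would justify: because $h_{r,n}$ and $g_n$ are CSCG, the phases $\arg(h_{r,n}),\arg(g_n)$ are each uniform on $[0,2\pi)$ and independent of the magnitudes, so $\theta_n^\star \bmod (2\pi/2^b)$ is uniform on $[0,2\pi/2^b)$ and independent of $|h_{r,n}|, |g_n|$.

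With $a_n \triangleq |h_{r,n}|\,|g_n|$, the quantized reflected signal collapses to $\bm{h}_r^H \hat\ttheta \bm{g} = \sum_{n=1}^N e^{j\phi_n} a_n$. Squaring and separating diagonal from off-diagonal terms gives
\begin{equation}
|\bm{h}_r^H \hat\ttheta \bm{g}|^2 = \sum_{n=1}^N a_n^2 + \sum_{n\ne m} e^{j(\phi_n-\phi_m)} a_n a_m.
\end{equation}
Taking expectation over $\phi_n$, using independence from the magnitudes, and noting
\begin{equation}
\mu \triangleq \mathbb{E}[e^{j\phi_n}] = \frac{2^b}{2\pi}\int_{-\pi/2^b}^{\pi/2^b} e^{j\phi}\,d\phi = \frac{2^b}{\pi}\sin\!\left(\frac{\pi}{2^b}\right),
\end{equation}
I would obtain
\begin{equation}
P_r(b) = (1-\mu^2)\,\mathbb{E}\Big[\sum_{n} a_n^2\Big] + \mu^2\,\mathbb{E}\Big[\Big(\sum_{n} a_n\Big)^2\Big].
\end{equation}
Since the continuous-phase case attains $|\bm{h}_r^H \ttheta \bm{g}| = \sum_n a_n$, we also have $P_r(\infty) = \mathbb{E}[(\sum_n a_n)^2]$, which corresponds to setting $\mu=1$ above.

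The last step is the scaling argument. Using $\mathbb{E}[a_n^2] = \varrho_h^2\varrho_g^2$ and $(\mathbb{E}[a_n])^2 = (\pi/4)^2 \varrho_h^2\varrho_g^2$ from the Rayleigh statistics of $|h_{r,n}|,|g_n|$, I would expand
\begin{equation}
\mathbb{E}\Big[\Big(\sum_n a_n\Big)^2\Big] = N\varrho_h^2\varrho_g^2 + N(N-1)\tfrac{\pi^2}{16}\varrho_h^2\varrho_g^2 = \Theta(N^2),
\end{equation}
while $\mathbb{E}[\sum_n a_n^2] = N\varrho_h^2\varrho_g^2 = \Theta(N)$, so the first term in $P_r(b)$ is negligible compared to the second as $N\to\infty$. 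Dividing by $P_r(\infty)$ then yields $\eta(b) \to \mu^2 = \bigl(\tfrac{2^b}{\pi}\sin(\pi/2^b)\bigr)^2$, as claimed.

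The main obstacle I anticipate is not the moment calculation but the careful justification of independence: one must argue that the quantization residual $\phi_n$ is statistically independent of $a_n = |h_{r,n}||g_n|$ and i.i.d.\ across $n$, so that the expectation factorizes cleanly in both the cross-term $\mathbb{E}[e^{j(\phi_n-\phi_m)}] = \mu^2$ and in decoupling phases from magnitudes. Invoking the polar decomposition of CSCG vectors (magnitudes Rayleigh, phases uniform on $[0,2\pi)$, mutually independent) makes this precise and is the key enabling observation for the entire argument.
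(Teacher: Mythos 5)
Your proposal is correct and follows essentially the same route as the paper: quantization error uniform on $[-\pi/2^b,\pi/2^b)$ and independent of the Rayleigh magnitudes, the characteristic-function value $\mathbb{E}[e^{j\phi_n}]=\tfrac{2^b}{\pi}\sin(\pi/2^b)$, the same second-moment expansion giving $P_r(b)=N\varrho_h^2\varrho_g^2+\mu^2 N(N-1)\tfrac{\pi^2}{16}\varrho_h^2\varrho_g^2$, and the observation that the $\Theta(N)$ diagonal term is asymptotically negligible. Your intermediate rewriting of $P_r(b)$ as $(1-\mu^2)\mathbb{E}[\sum_n a_n^2]+\mu^2\mathbb{E}[(\sum_n a_n)^2]$ is only a cosmetic repackaging of the paper's computation, not a different argument.
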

\begin{proof}
The combined user channel can be expressed as
\begin{align}
{h}^H=   \bm{h}^H_{r} \ttheta\bm{g} =\sum_{n=1}^N |\hh^H_{r}(n)||\g(n)|e^{j(\theta_n+\phi_n+\psi_n)},
\end{align}
where $\hh^H_{r}(n)=|{\hh}^H_{r}(n)|e^{j\phi_n}$ and $\g(n)=|\g(n)|e^{j\psi_n}$, respectively. Since $|{\hh}^H_{r}(n)|$ and $|\g(n)|$ are statistically independent and follow Rayleigh distribution with mean values $\sqrt{\pi}\varrho_h/2$ and  $\sqrt{\pi}\varrho_g/2$, respectively, we have $\mathbb{E}(|{\hh}^H_{r}(n)||\g(n)|)=\pi\varrho_h\varrho_g/4$.  Since $\phi_n$ and $\psi_n$ are randomly and uniformly distributed in $[0, 2\pi)$, it follows that $\phi_n+\psi_n$ is uniformly distributed in $[0, 2\pi)$ due to the periodicity over $2\pi$.  As such, the optimal continuous phase shift is given by $\theta^{\star}_n = -(\phi_n +\psi_n)$, $n\in \mathcal{N}$ \cite{wu2018IRS}, with the corresponding quantized discrete phase shift denoted by  $\hat{\theta}_n$ which can be obtained similarly as \eqref{optimal:phase:SU}. 
  Define $\bar \theta_n=\hat{\theta}_n- \theta^{\star}_n=\hat{\theta}_n + \phi_n+\psi_n$ as the quantization error. As $\hat{\theta}_n$'s in  $\mathcal{F}$ are equally spaced,   it follows that $\bar \theta_n$'s are independently and uniformly distributed in $[-\pi/2^b, \pi/2^b)$. Thus, we have
{\begin{align}
\vspace{-2mm}
&\!\!\!\mathbb{E}(|{h}^H|^2)\!  =\!  \mathbb{E}\left(\left|\sum_{n=1}^N |{\hh}^H_{r}(n)||\g(n)|e^{j{\bar \theta_n}}  \right|^2 \right ) \! \!=\!  \mathbb{E} \!\left(   \sum_{n=1}^N |{\hh}^H_{r}(n)|^2|\g(n)|^2 \right. \nonumber \\
~&~~~~~~~~~~~~\quad~\left.   +        \sum_{n=1}^N \sum_{i\neq n}^N |{\hh}^H_{r}(n)||\g(n)||{\hh}^H_{r}(i)||\g(i)|e^{j{\bar \theta_n}-j{\bar \theta_i}}   \right).
\end{align}}Note that ${\hh}^H_{r}(n)$, $\g(n)$, and $e^{j\bar \theta_n}$ are independent with each other, with $\mathbb{E}\left(   \sum_{n=1}^N |{\hh}^H_{r}(n)|^2|\g(n)|^2\right)= N\varrho^2_h\varrho^2_g$ and $\mathbb{E}(e^{j\bar \theta_n})=\mathbb{E}(e^{-j\bar \theta_n})= 2^b/\pi\sin\left(\pi/2^b\right)$. It then follows that
\begin{align}\label{eq:pow}
\!\!\!\!P_r(b)&\! =\!N\varrho^2_h\varrho^2_g + N(N-1)\frac{\pi^2\varrho^2_h\varrho^2_g}{16}\Big(\frac{2^b}{\pi}\sin\left(\frac{\pi}{2^b}\right)\Big)^2.
\end{align}
For $b\geq 1$, it is not difficult to verify that $2^b/\pi\sin\left(\pi/2^b\right)$ increases with $b$  monotonically and satisfies
\begin{align}
\lim_{b \rightarrow \infty}\frac{2^b}{\pi}\sin\left(\frac{\pi}{2^b}\right)   = 1,
\end{align}
where $b \rightarrow \infty$ corresponds to the case of continuous phase shifts without quantization.  As a result, the ratio between $P_r(b)$ and $P_r(\infty)$ is given by \eqref{ratio} as $N\rightarrow \infty$, which completes the proof.
\end{proof}
Proposition \ref{scaling:law} provides a quantitative measure of  the user received power loss with discrete phase shifts as compared to the ideal case with continuous phase shifts. It is observed that  as $N\rightarrow \infty$, the power ratio $\eta (b)$ depends only on the number of discrete phase-shift levels, $2^b$,  but is regardless of $N$. This result implies that using a practical IRS  even with discrete phase shifts, the same asymptotic squared power gain of $\mathcal{O}(N^2)$ as that with  continuous phase shifts shown in \cite{wu2018IRS} can still be achieved (see  \eqref{eq:pow} with $N\rightarrow \infty$).  As such,  the design of  IRS hardware and control module can be greatly simplified by using discrete phase shifters, without compromising the performance in the large-$N$ regime.  Since $\eta (1)  =-3.9$ dB, $\eta (2)= -0.9$ dB,  and $\eta (3)=-0.2$ dB as shown in Table \ref{table2},  using 2 or 3-bit phase shifters is practically sufficient to achieve close-to-optimal  performance with only approximately 0.9 dB or 0.2 dB power loss. In general,  there exists an interesting cost tradeoff between the number of reflecting elements $(N)$ and the resolution of phase shifters $(2^b)$ used at the IRS. For example, one can use more reflecting elements (larger $N$) each with a lower-resolution (smaller $b$) phase shifter (thus lower cost per element), or less number of elements (smaller $N$) with higher-resolution (larger $b$) phase-shifters (i.e., higher cost per element), to achieve the same received power at the user.  As such, $N$ and $b$ can be flexibly set in practical systems based on the required performance as well as the manufacturing cost of each reflecting element and that of its phase shifter component so as to minimize the total cost of the IRS.
\begin{table*}[!t]
\caption{The power loss of using IRS with discrete phase shifts.}\label{table2}
\centering
\small
\newcommand{\tabincell}[2]
\small
\begin{tabular}{|m{5cm} |m{1cm}|m{1cm}|m{1cm}|m{3.6cm}|}
  \hline
{Number of control bits:} {$b$}& {$b=1$} &{$b=2$} & {$b=3$} & $b= \infty$~\textcolor{white}{xxxxxx xxxxxx} (continuous phase shifts)     \\ \hline
{Power loss:}                 $1\big/\Big(\frac{2^b}{\pi}\sin\left(\frac{\pi}{2^b}\right)\Big)^2$ &  $3.9$ dB  & $0.9$ dB & $0.2$ dB & $0$ dB    \\ \hline
\end{tabular}\vspace{-0.5cm}
\end{table*}

\section{Multiuser System}\label{multiuser:sec}
In this section, we study the general multiuser setup where multiple  users share the same time-frequency dimension for communications (e.g., in space division multiple access) and they are located at arbitrary locations in the single-cell network among which only some are aided by the nearby IRS in general.  For this general setup, we propose two algorithms to obtain the optimal and suboptimal solutions to (P1), respectively.
\subsection{Optimal Solution}
For any given phase shifts ${\bm \theta}$,  the combined channel from the AP to user $k$ is denoted by ${\bm{h}}^H_k\triangleq  \bm{h}^H_{r,k}\ttheta\bm{G}+\bm{h}^H_{d,k}$.
Thus, problem (P1) is reduced to
\begin{align}
\text{(P3)}: ~~\min_{\W} ~~~&\sum_{k=1}^{K}\|\bm{w}_k\|^2 \\
\mathrm{s.t.}~~~~&\frac{|{\bm{h}}^H_k\bm{w}_k |^2}{\sum_{j\neq k}^{K}|{\bm{h}}^H_k\bm{w}_j |^2 +  \sigma^2_k}\geq \gamma_k, \forall k \in \mathcal{K}.\label{P2:SINR}
\end{align}
Note that (P3) is the conventional power minimization problem in the multiuser MISO downlink broadcast channel, which can be efficiently and optimally solved by using the fixed-point iteration algorithm based on the uplink-downlink duality \cite{wiesel2006linear,schubert2004solution,luo2006introduction}.  Specifically,  the optimal solution is  known as the   minimum mean
squared error (MMSE) based linear precoder  given by
\begin{align}\label{eq:MISObf}
\w^*_k = \sqrt{p_k}\hat{\w}^*_k, \forall k \in \mathcal{K}.
\end{align}
where
\begin{equation}\label{eq:powerallocation}
\begin{bmatrix}
p_1  \\
\vdots\\
p_K  \\
\end{bmatrix}= {\bm Q}^{-1}\begin{bmatrix}
\sigma^2_1  \\
\vdots\\
\sigma^2_K    \\
\end{bmatrix},~~
{\bm Q}(i,j)=
\left\{
\begin{aligned}
&\frac{1}{\gamma_i}|\hh^H_i\hat{\w}^*_i|^2,  && ~~i=j, \\
&-{|\hh^H_i\hat{\w}^*_j|^2}, && ~~i\neq j,  \forall i, j \in \mathcal{K}, \\
\end{aligned}
\right.
\end{equation}
\begin{align}\label{bf:direction}
\hat{\w}^*_k =\frac{ (\I_{M} + \sum_{i=1}^{K} \frac{\lambda_i}{\sigma_i^2}{\bm{h}}_i{\bm{h}}^H_i )^{-1}{\bm{h}}_k}{ ||   (\I_{M} + \sum_{i=1}^{K} \frac{\lambda_i}{\sigma_i^2}{\bm{h}}_i{\bm{h}}^H_i )^{-1}{\bm{h}}_k     || }, \forall k,
\end{align}
\begin{align}\label{lambda}
\lambda_k = \frac{\sigma^2_k}{(1+\frac{1}{\gamma_k} )\hh^H_k(  \I_{M} +\sum_{i=1}^{K} \frac{\lambda_i}{\sigma_i^2} \hh_i\hh_i^H)^{-1}\hh_k}, \forall k.
\end{align}
First, all $\lambda_k$'s can be obtained by using the fixed-point algorithm to solve $K$ equations in \eqref{lambda}. With $\lambda_k$'s, $\hat{\w}^*_k$'s can be obtained from \eqref{bf:direction} and then  $p_k$'s can be obtained from \eqref{eq:powerallocation}. Finally,   ${\w}^*_k$'s are obtained  by using \eqref{eq:MISObf} with $\hat{\w}^*_k$'s and  $p_k$'s.

 As shown in \eqref{eq:MISObf}-\eqref{lambda},  the optimal transmit precoder $\W$ cannot be expressed as a closed-form expression of ${\bm \theta}$ as in the single-user case (i.e., the MRT precoder in Section III) and thus transforming (P1) into an ILP is impossible to our best knowledge. As such, the globally optimal phase shifts to (P1) can only be obtained by the exhaustive search method. Specifically, we can  search all the possible cases of ${\bm \theta}$ and for each case, we solve (P3) to obtain the corresponding transmit power at the AP.  The globally optimal ${\bm \theta}$ is then given by the one that achieves the minimum AP transmit power. As the optimal algorithm requires computing the MMSE precoder $\W$ and exhaustively searching the phase shifts ${\bm \theta}$,   the total complexity for it can be shown to be  $\mathcal{O}(L^N( I_{itr}(KM^2+ M^3)+ K^3+K^2M + KMN ))$ where $I_{itr}$ denotes the number of iterations required for obtaining $\lambda_k$'s in \eqref{lambda} in each case (which is observed  to increase with $K$ linearly in our simulations).

\subsection{Suboptimal Solution}
To reduce the computational complexity of the optimal solution, we extend the successive refinement algorithm in Section  \ref{suboptimal:solution:SU} to the multiuser case, assuming  $M\geq K$.  Specifically, the suboptimal ZF-based linear precoder is employed  at the AP to  eliminate the multiuser interference and meet all the SINR requirements. Then the phase shifts at the IRS are successively refined to minimize the total transmit power at the AP.

With  the combined channel ${\bm{h}}^H_k$'s, $k\in \mathcal{K}$,  the corresponding ZF constraints are given by ${\bm{h}}^H_k\w_j=(\bm{h}^H_{r,k}\ttheta\bm{G}+\bm{h}^H_{d,k}){\bm w}_j =0$, $\forall j\neq k, j\in \mathcal{K}$.  
Let ${\HH}^{H} =  \HH_{r}^{H}\ttheta\G + \HH_{d}^{H}$, where ${\HH}^{H}_r=[{\bm h}_{r,1}, \cdots,{\bm h}_{r,K}]^H$ and ${\HH}_d^{H}=[{\bm h}_{d,1}, \cdots,{\bm h}_{d,K}]^H$.
 With those additional constraints, it is not difficult to verify that the optimal transmit precoder $\W$  to (P1)  is given by the pseudo-inverse of the combined channel $\HH^H$ with proper power allocation among different users, i.e.,
\begin{align}\label{zf:bf}
\W = \HH( \HH^H \HH)^{-1}\PP^{\frac{1}{2}},
\end{align}
 where $\PP = \text{diag}(p_1,\cdots, p_K)$ is the power allocation matrix. By substituting $\W$ into \eqref{SINR:constraints} in (P1), the SINR constraint of user $k$  is reduced to $\frac{p_k}{\sigma^2_k} \geq \gamma_k, \forall k \in \mathcal{K}$. Since this constraint should be  met with equality at the optimal solution to (P1), we have ${p_k}= \sigma^2_k\gamma_k,  k\in \mathcal{K}$.  The total transmit power at the AP is then given by
 \begin{align}
\sum_{k=1}^{K}\|\bm{w}_k\|^2 &= {\rm{tr}}(\W^H\W) = {\rm{tr}}(\PP^\frac{1}{2}( \HH^H \HH)^{-1}\PP^\frac{1}{2} )\nonumber \\
 &\overset{(a)}{=}{\rm{tr}}(\PP ( \HH^H \HH)^{-1})=  {\rm{tr}}(\PP( (\HH_{r}^{H}\ttheta\G + \HH_{d}^{H})(\HH_{r}^{H}\ttheta\G + \HH_{d}^{H})^H )^{-1}  ),
 \end{align}
 where $(a)$ is due to  the  fact that ${\rm{tr}}(\A\B)=  {\rm{tr}}(\B\A)$ for any matrices $\A$ and $\B$ with appropriate dimensions.
As a result, (P1) is transformed to
\begin{align}
\text{(P5)}: ~~\min_{\bm{\theta}} ~~~& {\rm{tr}}\left(\PP( (\HH_{r}^{H}\ttheta\G + \HH_{d}^{H})(\HH_{r}^{H}\ttheta\G + \HH_{d}^{H})^H )^{-1} \right)\triangleq  P({\bm \theta}) \\
\mathrm{s.t.}~~~~&\theta_n \in \mathcal{F},\forall n \in \mathcal{N}. \label{eq:modulus2}
\end{align}
Note that for $K=1$,  (P5) is equivalent to the combined channel power gain maximization problem, i.e.,  (P2),  in the single-user case considered in Section III.
However, in the multiuser case,  (P5) becomes more involved than (P2)  due to the matrix inverse operation that results in a more complicated  non-convex objective function $P({\bm \theta})$.
Nevertheless, by fixing any $N-1$ phase shifts in each iteration,  we can find the optimal solution of the remaining  discrete phase shift via one-dimensional search over $\mathcal{F}$, i.e.,
\begin{align}\label{optimal:phase}
\theta^*_n = \arg \min_{\theta_n \in \mathcal{F}} P({\bm \theta}).
\end{align}
Note that for the above problem, if a rank-deficient  channel matrix, i.e., ${\rm{rank}}(\HH) < K$, is encountered for the optimization of some   $\theta_n$, then the corresponding value of $P({\bm \theta})$ is  set as positive infinity for tractability.
Considering that  the number of discrete phase-shift values in $\mathcal{F}$  is generally limited in practice \cite{cui2014coding,kaina2014shaping} (e.g., $L=2$ for $b=1$ or $L=4$ for $b=2$), the  one-dimensional search in \eqref{optimal:phase} is very efficient.
The above procedure is repeated until the  fractional decrease of $P({\bm \theta})$ is less than a sufficiently small threshold.
It can be similarly  shown as in Section III-B that $P({\bm \theta})$ is  lower-bounded by a finite value and thus the convergence of the proposed ZF-based successive refinement algorithm  is  guaranteed.

In contrast to the optimal solution, the suboptimal solution is based on the ZF precoder at the AP and the successive refinement algorithm for finding the phase shifts at the IRS, thus its  complexity is given by $\mathcal{O}(\hat{I}_{itr}L(K^3+K^2M+KMN))$,  where $\hat{I}_{itr}$ denotes the number of iterations required for achieving convergence of the successive refinement algorithm. Note that
$K^3+K^2M + KMN\leq I_{itr}(KM^2+ M^3)+K^3+K^2M + KMN$ always holds, and $\hat{I}_{itr}L$ is usually much less than $L^N$ in practice based on our simulations.  Thus, the proposed suboptimal algorithm is computationally much more efficient  for IRS with small $L$ and large $N$, as compared to the optimal algorithm.
%
\section{Numerical Results}

 In this section, we provide numerical results to validate our analysis as well as the effectiveness of the proposed algorithms.  A three-dimensional (3D) coordinate is considered as shown in Fig. \ref{simulation:setup:SU},  where a uniform linear array (ULA) at the AP and a uniform rectangular array (URA) at the IRS  are located in $x$-axis and $y$-$z$ plane, respectively. The reference antenna/element at the AP/IRS are respectively  located at $(d_x, 0, 0)$ and  $(0, d_y, 0)$, where in both cases  a half-wavelength spacing is assumed among adjacent antennas/elements.  For the IRS, we set $N=N_{y}N_{z}$ where $N_{y}$ and $N_{z}$ denote the number of reflecting elements along  $y$-axis and  $z$-axis, respectively.   For the purpose of exposition, we fix $N_y=4$ and increase $N_z$ linearly with $N$.  The distance-dependent channel path loss is modeled as
\begin{align}\label{pathloss}
\eta(d) = C_0\left( \frac{d}{d_0} \right)^{-\alpha},
\end{align}
where $C_0$ is the path loss at the reference distance $d_0=1$ meter (m), $d$ denotes the link distance, and $\alpha$ denotes the path loss exponent. Each antenna at the AP is assumed to have an isotropic radiation pattern and thus the antenna gain is 0 dBi. In contrast, as the IRS reflects signals only in its front half-sphere, each reflecting element is assumed to have a 3 dBi gain for fair comparison.
To account for  small-scale fading, we assume the Rician fading channel model for all channels involved.  For example, the AP-IRS channel $\G$ can be expressed as
 \begin{align}
 \G = \sqrt{\frac{ \beta_{\rm AI} }{1+  \beta_{\rm AI} }}\G^{\rm LoS} +  \sqrt{\frac{1}{1+\beta_{\rm AI}}}\G^{\rm NLoS},
 \end{align}
 where $\beta_{\rm AI}$ is the Rician factor, and $\G^{\rm LoS} $ and $\G^{\rm NLoS}$ represent  the deterministic LoS (specular) and Rayleigh fading components, respectively. Note that the above model is simplified  to   Rayleigh fading channel when  $ \beta_{\rm AI} = 0$ or LoS channel when $ \beta_{\rm AI} \rightarrow  \infty$.
  The entries in $ \G$ are then multiplied by the square root of the distance-dependent path loss in \eqref{pathloss} with the path loss exponent denoted by   $\alpha_{\rm AI}$. The AP-user and IRS-user channels are similarly  generated by following the above procedure.  The path loss exponents of the AP-user and IRS-user links are  denoted by $\alpha_{\rm Au}$ and  $\alpha_{\rm Iu}$, respectively, and the corresponding Rician factors are denoted by  $\beta_{\rm Au}$ and $\beta_{\rm Iu}$, respectively.
  In practice, the IRS is usually deployed to serve the users that suffer from severe signal attenuation in the AP-user channel and thus we set $\alpha_{\rm Au}=3.5$ and $\beta_{\rm Au}=0$, while their counterparts for AP-IRS and IRS-user channels will be properly specified later depending on the scenarios.
  Without loss of generality, we assume that all users have the same SINR target, i.e., $\gamma_k=\gamma,  k \in \mathcal{K}$. {The stopping threshold for the successive refinement algorithms is set as $10^{-4}$.}  Other system parameters are set as follows unless specified later:  $C_0= -30$ dB,  $\sigma_k^2=-90$\,dBm, $k \in \mathcal{K}$, $d_x= 2$ m, and  $d_y= 50$ m.

 \begin{figure}[!t]
\centering
\includegraphics[width=0.55\textwidth]{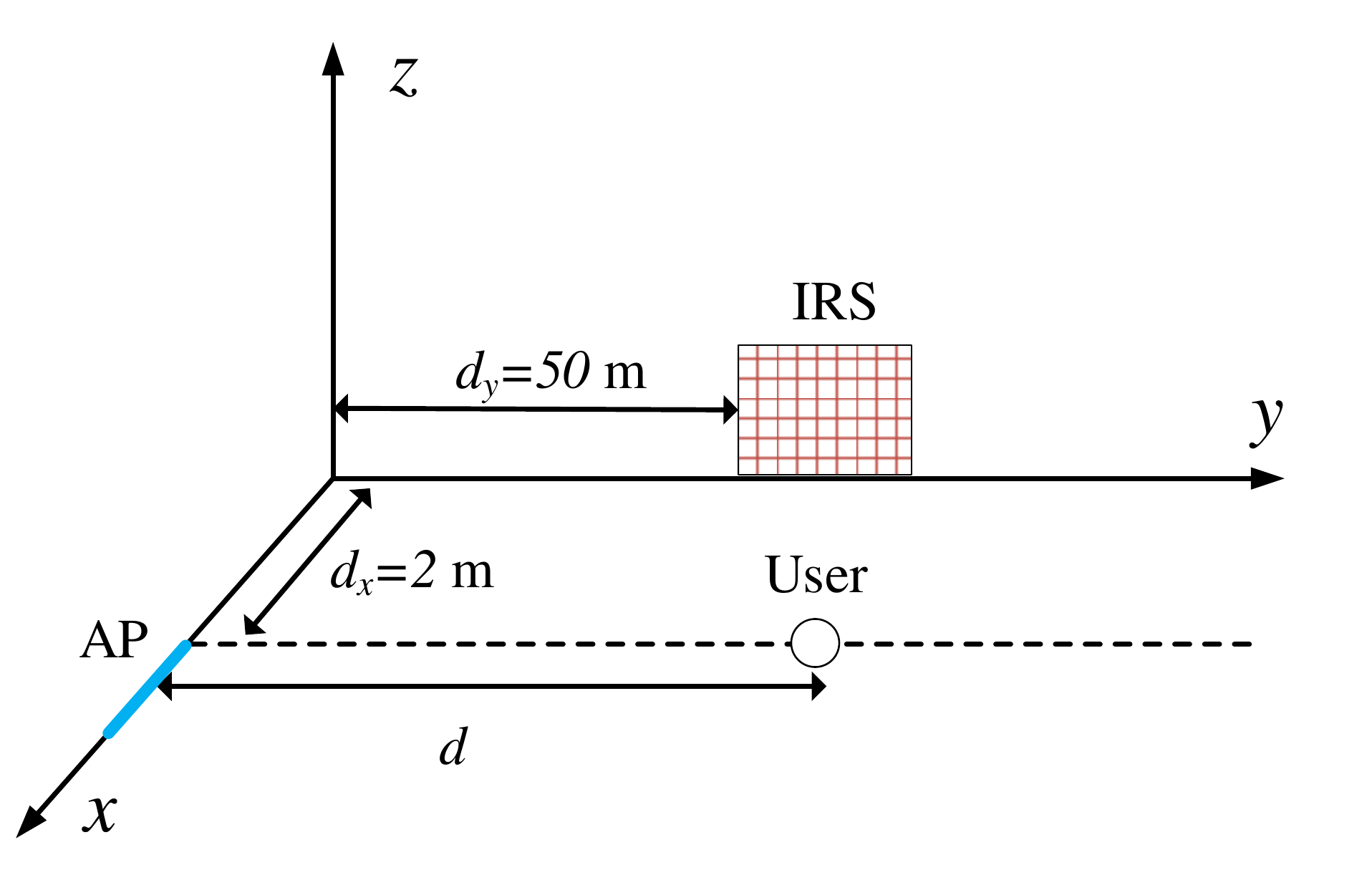}
\caption{Simulation setup of the single-user case. } \label{simulation:setup:SU}\vspace{-5mm}
\end{figure}

\subsection{Single-User System}
\subsubsection{Performance Comparison with  Benchmark Schemes}

We consider that one single  user lies on a  line that is in parallel to $y$-axis shown in Fig. \ref{simulation:setup:SU}, with its location denoted by ($d_x$, $d$, 0).
By varying the value of $d$, the distances of AP-user and IRS-user links change accordingly and we examine the minimum transmit power required for serving the user with a given SNR target. The channel parameters are set as $\alpha_{\rm AI}=2.2$, $\alpha_{\rm Iu}=2.8$, $\beta_{\rm AI}=0$, and $\beta_{\rm Iu}= \infty$.
We compare the following schemes: 1) Lower bound: solve (P1) with $b\rightarrow \infty$ or continuous phase shifts  by using semidefinite relaxation (SDR) with Gaussian randomization which has been shown to achieve near-optimal performance in \cite{wu2018IRS};  2) Optimal algorithm: solve problem \eqref{prob:MILP2} by using the branch-and-bound method; 3) Successive refinement: use the proposed suboptimal algorithm in Section III-B;  4) Quantization scheme: quantize the continuous phase shifts obtained in 1) to their  respective nearest values in $\mathcal{F}$; 5) Codebook based scheme (explained later) which is also used as the phase-shift initialization required in scheme 3);  6)  Benchmark scheme without using the IRS by setting  $\bm{w} = \sqrt{\gamma \sigma^2}{\bm{h}_d}/{\|\bm{h}_d\|^2}$.  For schemes 2)-5), we set $b=1$.  For the  codebook based scheme, we adopt the widely used Hadamard matrix \cite{liu2018joint},  whose entries are either $1$ or $-1$, thus corresponding to the phase shift of $0$ or $\pi$. Besides, its columns are mutually orthogonal and thus span the whole $N$-dimensional space. The codebook based scheme starts by using each of the $N$ columns of the Hadamard matrix as the phase-shift vector ${\bm \theta}$ and then selects the one resulting in the minimum transmit power at the AP. Note that it generally outperforms the scheme with fixed phase shifts at the IRS since the latter can be considered as a special case of the former with only one single vector in the codebook.

 \begin{figure}[!t]
\centering
\includegraphics[width=0.65\textwidth]{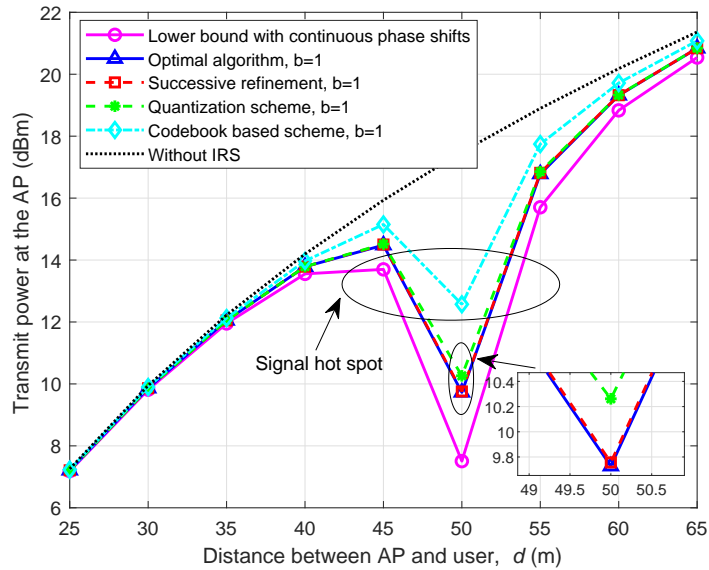}
\vspace{-0.4cm}
\caption{AP transmit power versus AP-user distance. } \label{simulation:distance} \vspace{-4.5mm}
\end{figure}

In Fig. \ref{simulation:distance}, we compare  the  transmit power required at the AP for the above schemes versus the AP-user distance by setting $M=4$,  $N=16$, and $\gamma=25$ dB.
First,  it  is observed that the required transmit power of using 1-bit phase shifters is significantly lower than that without the IRS when the user locates in the vicinity of the IRS. This demonstrates the practical usefulness of IRS in creating a ``signal hot spot'' even with very coarse and low-cost phase shifters.
 Moreover, one can observe that using the IRS with 1-bit phase shifters suffers performance loss compared to the transmit power lower bound with continuous phase shifts. This is expected since due to  discrete phase shifts,  the  multi-path signals  from the AP including those reflected and non-reflected by the IRS  cannot be perfectly aligned in phase at the receiver, thus resulting in a performance loss.
Finally,  it is observed that the proposed successive refinement algorithm and  quantization scheme both achieve near-optimal  performance in this single-user case, and they significantly outperform the codebook based scheme. This  demonstrates the advantage of optimizing phase shifts based on the actual channels over only selecting them from a set of  pre-defined phase shift vectors in a codebook. 

\begin{figure}[!t]
\centering
\includegraphics[width=0.65\textwidth]{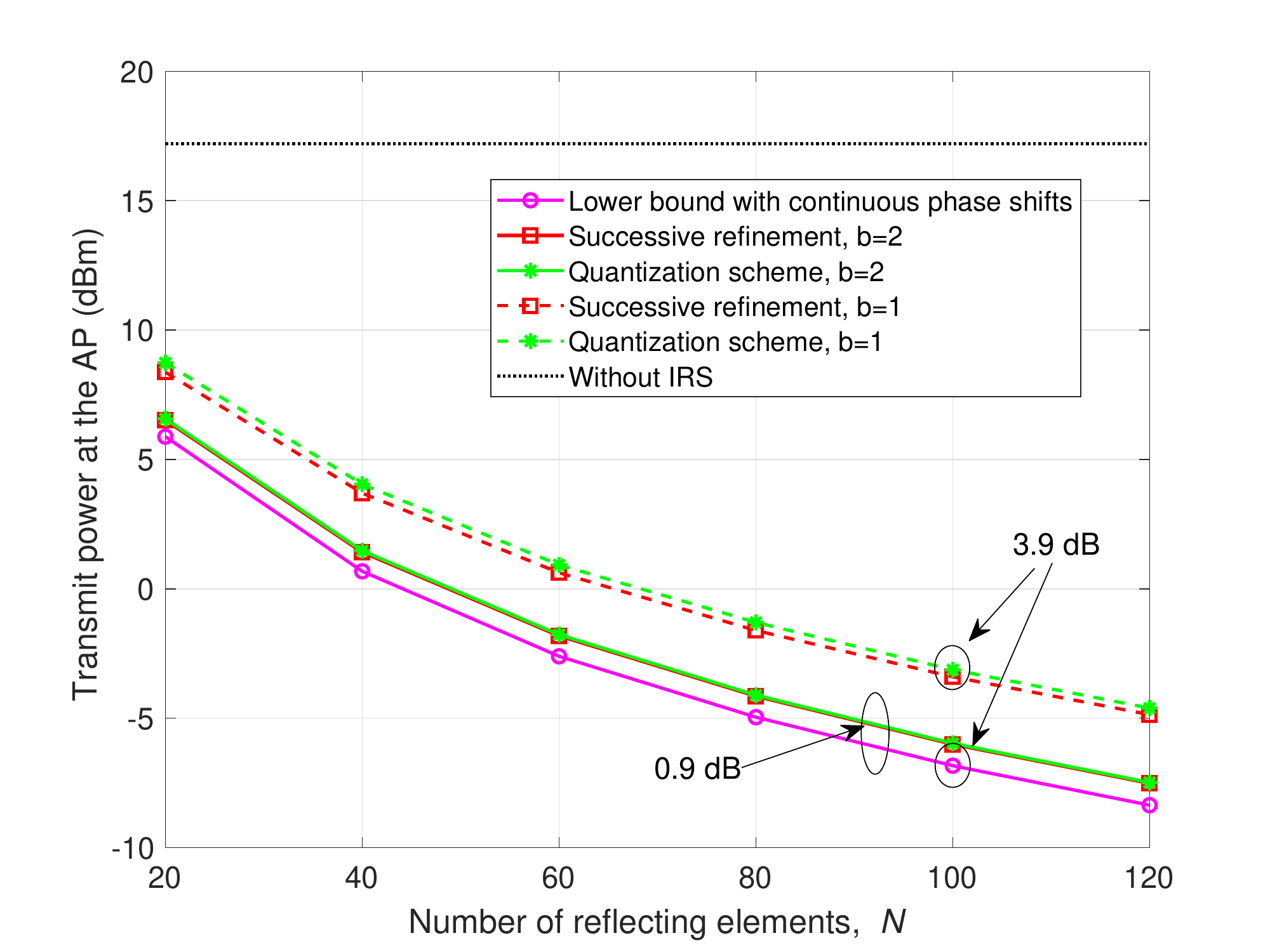}
\vspace{-0.4cm}
\caption{AP transmit power versus the number of reflecting  elements.  } \label{simulation:N} \vspace{-5mm}
\end{figure}

\subsubsection{Impact of Discrete Phase Shifts}

To validate the theoretical analysis  in Proposition \ref{scaling:law}, we plot in Fig.  \ref{simulation:N} the AP transmit power  versus the number of reflecting elements $N$ at the IRS when $d=50$ m. In particular, we consider both $b=1$ and $b=2$ for discrete phase shifts at the IRS. Other parameters are set the same as those in Fig. \ref{simulation:distance}.  From Fig. \!\ref{simulation:N}, it is  observed that as $N$ increases, the performance gap between the quantization scheme (for both $b=1$ and $b=2$) and the lower bound (for $b=\infty$) first increases and then approaches a constant that is determined by $\eta(b)$ given in \eqref{ratio} (i.e., $\eta (1)  =-3.9$ dB  and $\eta (2)=-0.9$ dB shown in Table \ref{table2}). This is expected since when $N$ is moderate,  the signal power of the AP-user link is comparable to that of the IRS-user link, thus the misalignment of multi-path signals due to discrete phase shifts becomes more pronounced with increasing $N$. However, when $N$ is sufficiently large such that the reflected signal power by the IRS  dominates in the total received power at the user, the performance loss arising from the phase quantization error converges to that in accordance with  the asymptotic analysis given in Proposition  \ref{scaling:law}.  In addition, one can observe that in this case the gain achieved by the successive refinement algorithm over the quantization scheme is more evident with $b=1$ compared to $b=2$. It is worth pointing out that the quantization scheme needs to first obtain the continuous phase shifts by invoking the  semidefinite program (SDP)  solver  \cite{wu2018IRS} and thus has a higher complexity than the  successive refinement algorithm.

\begin{figure}[ht]
\centering
~~~~~~~~\includegraphics[width=0.4\textwidth]{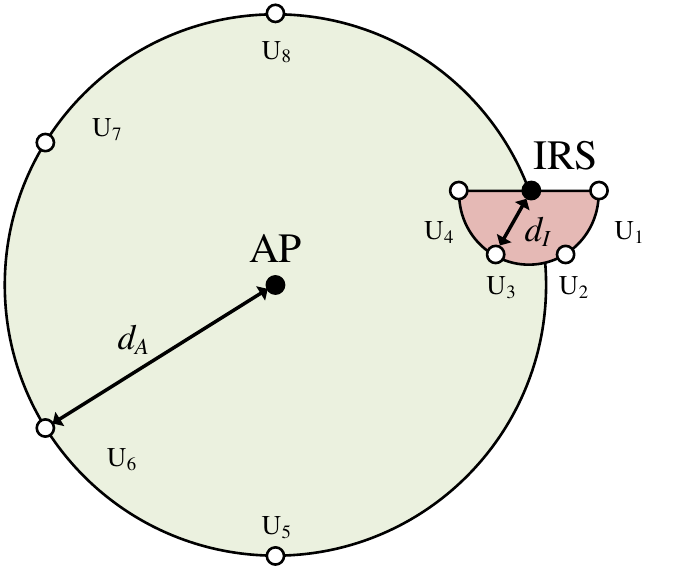}
\caption{Simulation setup of the multiuser system  (top view) where the deployment of the  AP and IRS is the same as that in Fig. \ref{simulation:setup:SU}.} \label{simulation:MU:setup}\vspace{-5mm}
\end{figure}

  \begin{figure}[ht]
\centering
\includegraphics[width=0.6\textwidth]{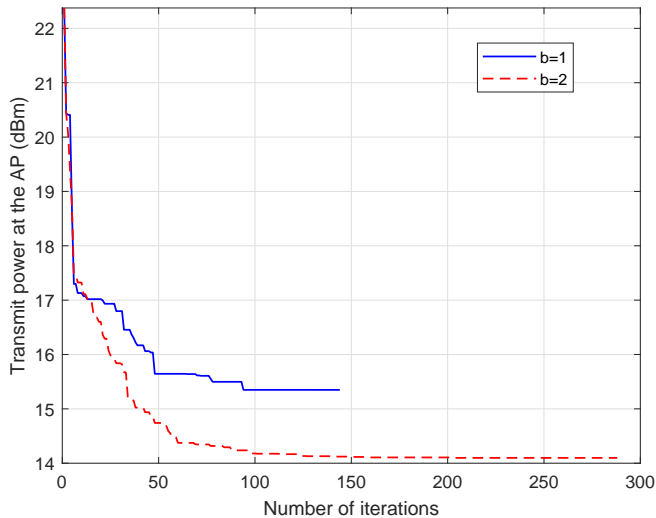}
\caption{Convergence behaviour of the ZF-based successive refinement algorithm for the multiuser case with $M=8, K=8, N=48$, and $\gamma=15$ dB. } \label{convergence} \vspace{-3mm}
\end{figure}

\subsection{Multiuser System}

Next, we consider a multiuser system with eight users, denoted by $U_k$'s, $k = 1,\cdots ,8$, and their locations are shown in Fig. \ref{simulation:MU:setup}. Specifically,  $U_k$'s, $k=1,2,3,4$, lie evenly on a half-circle centered at the reference element  of the IRS  with radius $d_I=2$ m and the rest users lie evenly on a half-circle centered at  the  reference antenna of the AP  with radius $d_A=50$ m. This setup can practically correspond to the case that the IRS is deployed at the cell-edge to cover an area with a high density of users (e.g., a hot spot scenario).  The channel parameters are set as $\alpha_{\rm AI}=2.2$, $\alpha_{\rm Iu}=2.8$, $\beta_{\rm AI}= \infty$, and $\beta_{\rm Iu}= 0$. First, we  show the convergence behaviour of the proposed successive refinement algorithm in Section IV-B with $M=8, K=8, N=48$, and $\gamma=15$ dB. As shown in Fig. \ref{convergence}, it is observed that  this suboptimal algorithm converges more rapidly for the case of  $b=1$ as compared to that of   $b=2$, while their required complexities are much smaller than that of the optimal exhaustive search, i.e.,  $\mathcal{O}(2^{bN})$.

\begin{figure}[ht]
\centering
\subfigure[$M=4, N=8, K=2$]{\includegraphics[width=0.55\textwidth]{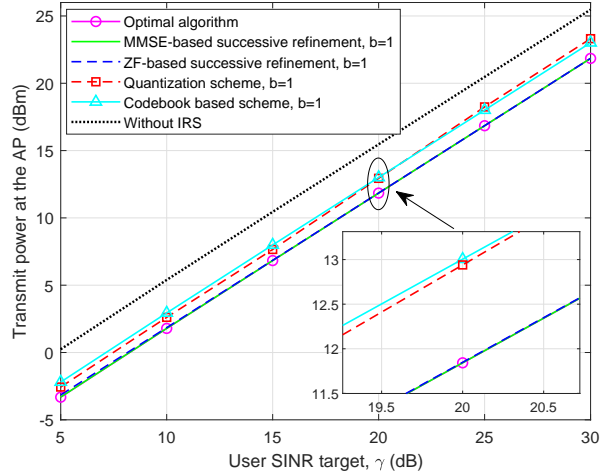} \label{transmit:pow:SINRa} } 
\subfigure[$M=6, N=32, K=4$]{\includegraphics[width=0.49\textwidth]{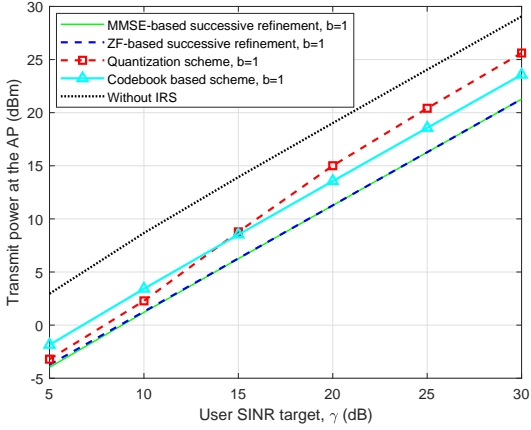}\label{transmit:pow:SINRb}} \!\! \!\! \!\! \!\!
\subfigure[$M=6, N=64, K=4$]{\includegraphics[width=0.49\textwidth]{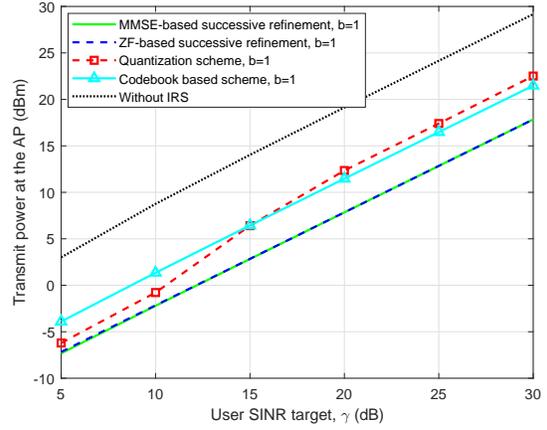}\label{transmit:pow:SINRc}} 
\caption{AP transmit power versus the user SINR target under different setups.  } \label{transmit:pow:SINR} \vspace{-3mm}
\end{figure}

\subsubsection{Performance Comparison with  Benchmark Schemes}
 In Fig. \ref{transmit:pow:SINR}, we plot the transmit power versus the user SINR target by setting $b=1$ under different system setups. We assume that $U_k$'s, $k = 1,2$, are active (need to be served) for $K=2$ and $U_k$'s , $k=1,2,3,4$ are active for $K=4$. Due to the high complexity of exhaustive search in the optimal algorithm, we consider it as a benchmark scheme only for a relatively small system size shown in Fig. \ref{transmit:pow:SINRa}, while for Figs. \ref{transmit:pow:SINRb} and \ref{transmit:pow:SINRb}, we propose an MMSE-based successive refinement algorithm as the benchmark. Specifically, in each iteration, we search all the possible values of $\theta_n$ over $\mathcal{F}$ by fixing $\theta_\ell$'s, $\forall \ell\neq n, \ell\in \mathcal{N}$, and for each value, we solve (P3) to obtain the MMSE precoder  as well as the corresponding AP transmit power. If (P3) is not feasible for a specific phase-shift value in $\mathcal{F}$, the required AP transmit power is set as positive infinity. Then, the phase-shift value that corresponds to the minimum AP transmit  power is chosen as the optimal $\theta_n$ in each  iteration. The above procedure is repeated until the fractional decrease of the objective value is less than the pre-defined threshold.  Since the MMSE precoder is the optimal solution to (P3), the transmit power of the MMSE-based successive refinement algorithm generally serves as a lower bound for that of the ZF-based   successive refinement algorithm.
We compare the following schemes. 1) Optimal algorithm in Section IV-A (for Fig. \ref{transmit:pow:SINRa} only); 2) MMSE precoding based successive refinement algorithm given above; 3) ZF precoding based successive refinement algorithm proposed  in Section IV-B;  4)  Quantization scheme: quantizing the continuous phase shifts obtained by using the iterative algorithm in \cite{JR:wu2018IRS} to their respective nearest values in $\mathcal{F}$; 5) Codebook based scheme as in the single-user case; 6)  Benchmark scheme without the IRS.  For schemes 5) and 6), the MMSE precoder  is applied at the AP.

From Figs. \ref{transmit:pow:SINRa}-\ref{transmit:pow:SINRc}, it is first observed  that all the algorithms with IRS achieve significant transmit power reduction at the AP as compared to the case without IRS, which demonstrates the effectiveness of IRS in the multiuser scenario.  Second, one can observe from Fig. \ref{transmit:pow:SINRa} that the proposed ZF-based successive refinement algorithm achieves near-optimal performance and outperforms both the quantization and codebook based schemes.  In addition, by comparing  Figs. \ref{transmit:pow:SINRa} and \ref{transmit:pow:SINRc}, it is observed that the performance gain of  the proposed ZF-based algorithm over benchmark schemes becomes more pronounced as the system size becomes larger.  This is expected since for the codebook based scheme, the possible combinations of phase-shift vectors grow exponentially as $N$ increases, while the codebook based scheme only  linearly increases the codebook size with $N$.  It is worth pointing out that although the quantization scheme suffers from small performance loss in the low SINR regime compared to the proposed  ZF-based algorithm, it performs even worse than the codebook based scheme in the high SINR regime, as shown in  Figs. \ref{transmit:pow:SINRa}-\ref{transmit:pow:SINRc}. This is because  the multiuser interference becomes severe when the user SINR target is high, and thus a coarse quantization from continuous phase shift values to discrete ones results in  signal mismatch not only  in desired signal combining  but also in interference cancellation.      Finally, from Figs. \ref{transmit:pow:SINRa}-\ref{transmit:pow:SINRc}, one can observe that the ZF-based algorithm performs almost the same as the MMSE-based algorithm for a wide range of SINR targets in all considered setups.  The reason behind such a phenomenon is that the IRS can effectively reduce the undesired  channel correlation among users  via  providing additional controllable multi-path signals to its nearby users.  

\begin{figure}[ht]
\centering
\includegraphics[width=0.6\textwidth]{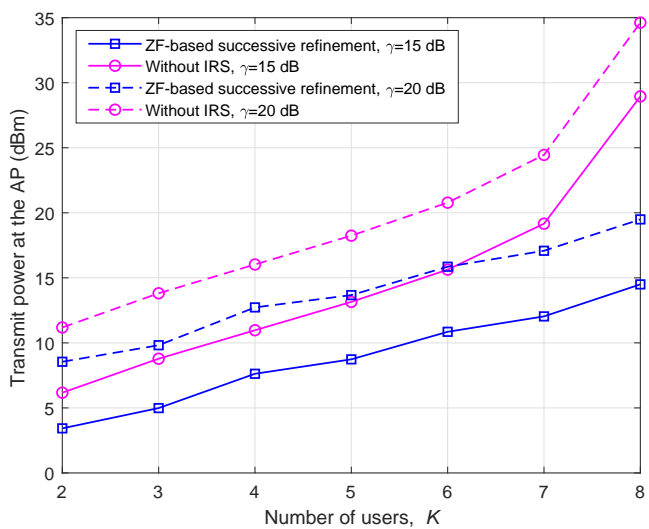}
\vspace{-0.4cm}
\caption{AP transmit power versus the number of users.} \label{simulation:K} \vspace{-5mm}
\end{figure}

\subsubsection{AP Transmit Power versus Number of Users}  In Fig. \ref{simulation:K}, we show the AP transmit power versus the number of users by setting $M=8$, $N=48$, and $b=1$. All other parameters are the same as those in Fig. \ref{transmit:pow:SINR}. In particular, we follow the user index order and successively add one user ($k=1,2,3,4$) near the IRS and then one user ($k=5,6,7,8$) far from the IRS to draw useful insights. Note that  $U_k$'s, $k=1,2,3,4$, located in the vicinity of the IRS, have similar path loss as the other  $U_k$'s, $k=5,6,7,8$, in the AP-user links.
From Fig. \ref{simulation:K}, it is first observed that adding a user near the IRS (e.g., adding $U_2$ corresponds to increasing $K$ from 2 to 3) requires less additional transmit power than that after  adding a user far from the IRS (e.g., adding $U_6$ corresponds to increasing  $K$ from 3 to 4), thanks to the passive beamforming gain provided  by the IRS.   More importantly, one can observe that  when the number of users approaches that of antennas at the AP, the transmit power in the case without IRS increases much faster than that in the case with IRS. This further demonstrates that the multiuser interference can be more effectively suppressed by applying the joint active and passive beamforming in the IRS-aided system.
Another important  implication of the above result is that the IRS has the capability of transforming a poorly-conditioned MIMO channel to a well-conditioned MIMO channel by adding more controllable multi-paths. For instance, for $K=M= 2$, the multiuser MIMO channel without IRS  has a rank approximately given by  ${\rm{rank}}(\bm{\HH}^H_d)=1$, if the two users have highly correlated AP-user channels; whereas by leveraging the IRS to actively contribute more signal paths, it is more likely to have  ${\rm{rank}}(\HH_{r}^{H}\ttheta\G + \HH_{d}^{H})=2$, thus helping reap the full spatial multiplexing gain in a multiuser MIMO system.

\subsubsection{IRS-aided Small MIMO versus Large MIMO without IRS}
\begin{figure}[ht]
\centering
\includegraphics[width=0.6\textwidth]{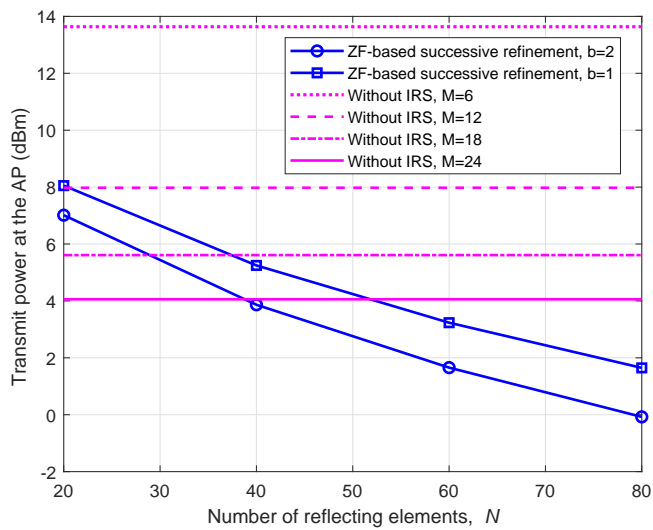} 
\vspace{-0.4cm}
\caption{AP transmit power versus the number of reflecting elements. } \label{simulation:MU:N} \vspace{-5mm}
\end{figure}
Due to the deployment of IRS, the number of transmit antennas at the AP can be reduced given the same AP transmit power and the users' SINR targets. This thus leads to a potentially more cost-effective solution for future wireless networks by using  small MIMO with low-cost IRS as compared to the traditional large (massive) MIMO without the IRS. To compare the performance of these two somewhat  opposite design paradigms, we show in Fig. \ref{simulation:MU:N} the AP transmit power versus the number of IRS elements with $M=6$, $\gamma=15$ dB, and $K=4$ (i.e., only the four users near the IRS are active). We consider the IRS with $b=1$ or $b=2$ as compared to a large MIMO system   without using IRS.
From  Fig. \ref{simulation:MU:N}, it is observed that for the AP transmit power of 4 dBm, we need to deploy 24 active antennas at the AP  in the case without IRS. In contrast, with  the same user SINR performance, we can alternatively use   a hybrid configuration by  deploying  only 6 active antennas at the AP together with either 52 1-bit or 38 2-bit passive reflecting elements at the IRS.  As a result, the associated RF power consumption and hardware cost for active antennas at the AP are significantly reduced over the case of large MIMO without IRS, thus providing a new cost-effective solution to achieve the same large MIMO performance gain.  Therefore,   the IRS-aided system provides more flexibility to trade-off between the number of active antennas ($M$) at the AP and that of  passive elements ($N$) at the IRS as well as their equipped phase shifters with different  levels ($2^b$), to optimally balance between the system performance and cost.

\section{Conclusions}

In this paper, we studied the beamforming optimization for IRS-aided wireless communications under practical discrete phase-shift constraints at the IRS.  Specifically, the continuous transmit precoder  at the AP and discrete phase shifts at the IRS  were jointly optimized to minimize the transmit power at the AP while meeting  the given user SINR targets. We  proposed both optimal and successive refinement based suboptimal  solutions for the single-user as well as multiuser cases. Furthermore, we  analyzed the performance loss of IRS due to discrete phase shifts  as compared to the ideal case with continuous phase shifts, when the number of reflecting elements becomes asymptotically large. Interestingly, it was shown that using IRS with even  1-bit phase shifters is still able to achieve the same asymptotic squared power gain as in the case with continuous phase shifts, subjected to only a constant power loss in dB. Simulation results showed that significant  transmit power saving can be  achieved by using IRS with discrete phase shifts as compared to the case without IRS, while the performance gains in terms of other metrics such as achievable rate and receive SINR can be similarly shown. In addition, it was revealed  that directly quantizing  the optimized  continuous phase shifts to obtain discrete phase shifts achieves near-optimal performance in the single-user case,  while its  performance degradation in the multiuser case  is non-negligible due to the severe co-channel interference. Finally, it was shown  that the ZF precoder based algorithm performs  almost as well as the MMSE precoder based algorithm, thanks to the multiuser channel rank improvement with the additional signal paths  provided by the IRS.

There are other  important issues that are not addressed in this paper yet, some of which are listed as follows to motivate future works.
\begin{itemize}
  \item Besides phase shifts optimization studied in this paper, the reflection amplitude of IRS's elements can be adjusted to further improve the system performance \cite{JR:wu2019IRSmaga}. However, the  joint optimization of reflection amplitude and phase shifts, in the form of either discrete or continuous values, is more challenging to solve.  In addition, it is unclear  whether the performance gain obtained by such joint phase-amplitude   optimization is sufficiently large to justify the increased hardware cost and algorithm complexity in practice, which needs further investigation.
  \item In practice, one AP may be assisted by multiple  IRSs while each IRS may be deployed to assist more than one APs. Although the local coverage of passive IRSs greatly simplifies the inter-IRS interference management if they are properly separated,  the joint user association, transmit precoding, and phase shifts optimization over multiple APs/IRSs is more involved than the single AP/IRS design problem considered in this paper, and thus is worthy of further investigation.
      \item In the multi-cell scenario, the joint deployment of the active APs and passive IRSs is also an important problem to investigate in future work. For example,  the densities of APs and IRSs as well as their locations can be jointly optimized to achieve the desired communication performance at minimum system cost.
\end{itemize}


\end{document}